\title{\LARGE \bf
Physics--Guided Neural Networks for Feedforward Control: From Consistent Identification to Feedforward Controller Design*
}
\author{Max Bolderman$^{1}$, Mircea Lazar$^{1}$, and Hans Butler$^{1,2}$
\thanks{\begin{flushright}
\begin{minipage}[r]{0.04\textwidth}
    \includegraphics[width=0.9\linewidth]{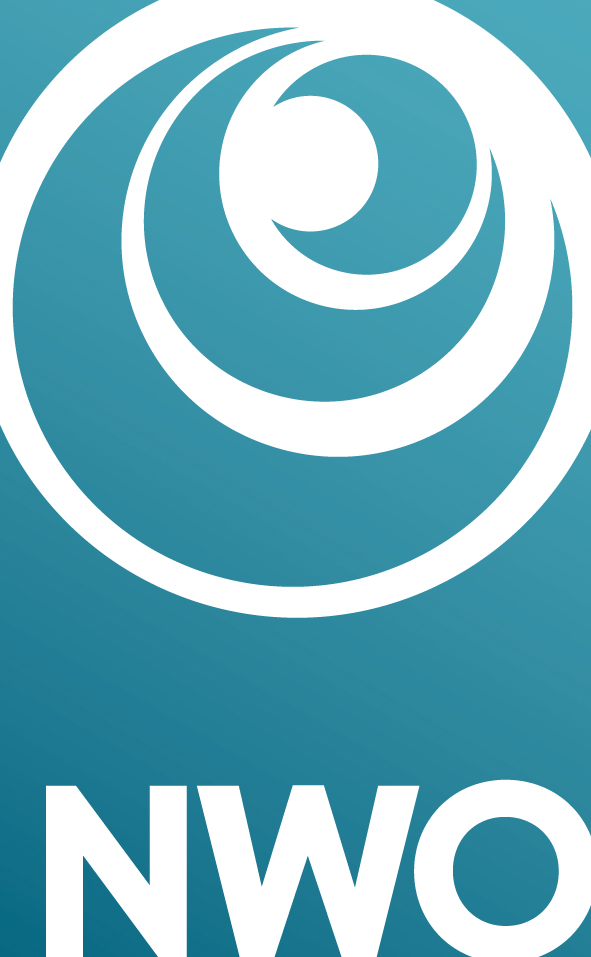}
\end{minipage}
\end{flushright}
\vspace{-1.05cm}
\begin{minipage}[l]{0.43\textwidth}
\hspace{0.5em} *This work is part of the research programme 9654 with project number 17973, which is (partly) financed by the Dutch Research Council (NWO). 
\end{minipage}}
\thanks{$^{1}$Control Systems Group, department of Electrical Engineering, Eindhoven University of Technology, The Netherlands. {\tt\small \{m.bolderman, m.lazar\}@tue.nl}}%
\thanks{$^{2}$ASML, Veldhoven, The Netherlands. 
        {\tt\small hans.butler@asml.nl}}%
}
\begin{document}
\newtheorem{thm}{Definition}[section]
\newtheorem{proposition}[thm]{Proposition}
\newtheorem{lemma}[thm]{Lemma}
\newtheorem{remark}{Remark}[section]
\newtheorem{definition}{Definition}
\newtheorem{assumption}{Assumption}[section]

\maketitle
\thispagestyle{empty}
\pagestyle{empty}

\begin{abstract}
	Model--based feedforward control improves tracking performance of motion systems, provided that the model describing the inverse dynamics is of sufficient accuracy. 
	Model sets, such as neural networks (NNs) and physics--guided neural networks (PGNNs) are typically used as flexible parametrizations that enable accurate identification of the inverse system dynamics. 
	Currently, these (PG)NNs are used to identify the inverse dynamics directly. 
	However, direct identification of the inverse dynamics is sensitive to noise that is present in the training data, and thereby results in biased parameter estimates which limit the achievable tracking performance. 
	In order to push performance further, it is therefore crucial to account for noise when performing the identification. 
	To address this problem, this paper proposes the use of a forward system identification using (PG)NNs from noisy data.
	Afterwards, two methods are proposed for inverting PGNNs to design a feedforward controller for high--precision motion control. 
	The developed methodology is validated on a real--life industrial linear motor, where it showed significant improvements in tracking performance with respect to the direct inverse identification.
\end{abstract}

\section{INTRODUCTION}
\label{sec:Introduction}
Model--based feedforward control strategies significantly improve tracking performance of motion systems, provided that the available model describing the inverse system dynamics is of sufficient accuracy~\cite{Devasia2002}. 
Typically, physics--based models, i.e., model sets that are obtained based on physical knowledge of the considered system, are used for feedforward control, see, e.g.,~\cite{Boerlage2003, Jamaluding2009, Igarashi2021}. 
However, the limited approximation capabilities of physics--based models result in structural model errors when such models are used to describe the complete dynamical behaviour~\cite{Schoukens2019, Nguyen2018}. 
Certainly, this becomes apparent when considering manufacturing tolerances, parasitic friction forces, or electromagnetic disturbances that are omnipresent in motion systems~\cite{Nguyen2018, Ma2018}. 

To deal with parasitic forces or other nonlinear phenomena that are hard to model, neural networks (NNs) are proposed as parametrizations for identification of the inverse system dynamics in~\cite{Sorensen1999}, see also~\cite{Zhang2007, Wanigasekara2019}. 
The universal approximation capabilities of NNs theoretically enables a perfect description of the system dynamics if the NN dimensions are chosen sufficiently large, and sufficiently exciting data can be obtained~\cite{Hornik1991}.
Additionally, it was shown in~\cite{Bolderman2021, Bolderman2022} that augmenting a physics--based feedforward with NNs improves tracking performances and robustness to non--training data. 

However, inverse model--based feedforward control design, including NN or PGNN model parametrizations, hinges on correctly identifying the inverse dynamics, which requires noiseless data.
Indeed, if noise is present in the training data, inverse identifications are prone to result in biased parameter estimates~\cite{Jung2013}. 
These biases decrease the model accuracy, and thereby introduce limitations on the achievable tracking performance resulting from the feedforward~\cite{Devasia2002}. 
On the other hand, methods for identification in the presence of noise are mostly designed for identification of the original, forward system dynamics~\cite{Nelles2001, Ljung1999}. 

Therefore, in this paper we address the problem of nonlinear model--based feedforward design from noisy data using a forward dynamics identification approach. 
First, we discuss the required assumptions to obtain a consistent estimate of the forward dynamics using PGNN model parametrizations for relevant noise structures. 
Then, we propose two methods for inverting the identified forward dynamics parametrized using (PG)NNs: a gradient--based numerical method that is inspired by techniques that are discussed in~\cite{Jensen1999}, and an analytic method suitable for a class of electromagnetic actuators common in high--precision motion control. 
Note that it is also possible to obtain a model of the inverse by performing a second identification based on noise--free data that is generated from the identified forward model~\cite{Widrow2008}. 
However, this requires two separate identifications, where each identification involves an experiment design, a parametrization, and a non--convex optimization problem. 

An extra benefit is that identification of the forward dynamics, i.e., minimization of the difference between the measured output and the predicted output, is in line with the evaluation of the tracking performance achieved by the feedforward controller. See also~\cite{Aarnoudse2021} which achieves this by filtering the cost function with the process sensitivity in a direct inverse identification setting, i.e., minimization of the difference between the input and the predicted input. 

The main contributions of this paper are as follows.
First, based on fundamental approaches discussed in~\cite{Ljung1999} for general nonlinear system identification, we show that forward system identification using a PGNN model class results in parameter estimates that are consistent, i.e., unbiased with probability $1$ when the data length goes to infinity. 
It is shown that the estimates remain consistent when the data is generated from a closed--loop experiment. 
Secondly, we derive methods for inversion of the identified PGNN describing the forward dynamics. 
Initially, a gradient--based technique is suggested to find the feedforward input.
Afterwards, an analytically invertible PGNN model is proposed for the case when the gradient--based technique is not implementable in real--time.

The remainder of this paper is organized as follows: Section~\ref{sec:Preliminaries} introduces the preliminaries, followed by the problem statement in Section~\ref{sec:ProblemStatement}. 
Section~\ref{sec:ConsistentEstimation} discusses the forward system identification using PGNNs. Section~\ref{sec:FeedforwardControllerDesign} derives the PGNN feedforward controller starting from the identified forward PGNN model. An experimental validation is performed in Section~\ref{sec:ExperimentalValidation}, followed by the main conclusions in Section~\ref{sec:Conclusions}.

\section{PRELIMINARIES}
\label{sec:Preliminaries}

\subsection{System dynamics and feedforward control}
Consider the discrete--time, single--input single--output (SISO), nonlinear time--invariant system with autoregressive exogeneous (ARX) noise structure, such that
\begin{align}
\begin{split}
\label{eq:SystemDynamics}
	y (t) & = h \big( \phi(t) \big) + v(t), \\
	\phi(t) & = [y(t-1), \hdots, y(t-n_a), \\
	& \quad \quad \quad u(t-n_k-1), \hdots, u(t-n_k-n_b)]^T. 
\end{split}
\end{align}
In~\eqref{eq:SystemDynamics}, $y(t)$ is the system output at time index $t$, $\phi(t)$ the regressor, $u(t)$ the input, $n_a$ and $n_b$ describe the order of the dynamics, and $n_k$ is the number of pure input delays. 
The function $h: \mathbb{R}^{n_a+n_b} \rightarrow \mathbb{R}$ describes the system dynamics, and $v(t)$ is assumed to be a zero mean white noise with variance $\sigma_v^2 := \lim_{N \to \infty} \frac{1}{N} \sum_{t = 0}^{N-1} v(t)^2$. 

\begin{remark}
	The nonlinear ARX (NARX) structure as in~\eqref{eq:SystemDynamics} is mostly popular for its simplicity, and does not generally describe the noise experienced on practical applications, see Fig.~\ref{fig:NoiseStructures}. 
	Although the main derivations in this paper focus on the NARX setting, we also highlight the output--error (NOE) and input--error (NIE) due to their direct relation with sensor and actuator noise, respectively.  
	Other noise structures can also be considered in a similar way. 
\end{remark}

From Fig.~\ref{fig:NoiseStructures}, we observe that the NOE dynamics gives
\begin{align}
\begin{split}
\label{eq:SystemDynamicsOE}
	y(t) & = h \big( \phi_{\textup{NOE}}(t) \big) + v(t), \\
	\phi_{\textup{NOE}}(t) & = [y(t-1)-v(t-1), \hdots, y(t-n_a)-v(t-n_a), \\
	& \quad \quad \quad u(t-n_k-1), \hdots, u(t-n_k-n_b)]^T.
\end{split}
\end{align}
Similarly, the NIE dynamics is given as
\begin{align}
\begin{split}
\label{eq:SystemDynamicsIE}
	y(t) & = h \big( \phi_{\textup{NIE}}(t) \big), \\
	\phi_{\textup{NIE}}(t) & = [y(t-1), \hdots, y(t-n_a), u(t-n_k-1)+\\
	 v(t-&n_k-1), \hdots, u(t-n_k-n_b) + v(t-n_k-n_b)]^T.
\end{split}	
\end{align}

The feedforward input $u_{\textup{ff}}(t)$ is the input $u(t)$ that yields $y(t) = r(t)$ for system~\eqref{eq:SystemDynamics} and some desired reference signal $r(t)$, when $v(t) = 0$. 
Substitution of $y(t) = r(t)$ and $u(t) = u_{\textup{ff}}(t)$ in~\eqref{eq:SystemDynamics} and shifting both sides $n_k+1$ samples forward in time, gives
\begin{align}
\begin{split}
\label{eq:OptimalFeedforward}
	r(t+n_k+1) = h \big( [& r(t+n_k), \hdots, r(t+n_k-n_b+1), \\
	& u_{\textup{ff}}(t), \hdots, u_{\textup{ff}}(t-n_b+1)]^T \big).
\end{split}
\end{align}
Then, with a slight abuse of notation, let $h^{-1}$ be the mapping describing the inverse dynamics such that the optimal feedforward is
\begin{align}
\begin{split}
\label{eq:OptimalFeedforwardRewritten}
	u_{\textup{ff}} (t) =&  h^{-1} \big( [ r(t+n_k+1), \hdots, r(t+n_k-n_b+1), \\
	& \quad \quad \; \; u_{\textup{ff}}(t-1), \hdots, u_{\textup{ff}}(t-n_b+1)]^T \big).
\end{split}
\end{align}
However, the actual function $h$ is unknown and, therefore, it cannot be used to design the feedforward controller. 

In the remainder of this paper, we use $\phi(t)$ as the regressor for the forward dynamics, and $\phi'(t)$ as the regressor for the inverse dynamics, e.g., $\phi'(t) = [y(t+n_k+1), \hdots, y(t+n_k-n_b+1), u(t-1), \hdots, u(t-n_b+1)]^T$ for the NARX case. 
Similarly, $\phi_{\textup{ff}}(t)$ and $\phi_{\textup{ff}}'(t)$ are obtained by subsitution of $y(t) = r(t)$ and $u(t) = u_{\textup{ff}}(t)$ in $\phi(t)$ and $\phi'(t)$, respectively.

\begin{figure}
	\centering
	\includegraphics[width=1.0\linewidth]{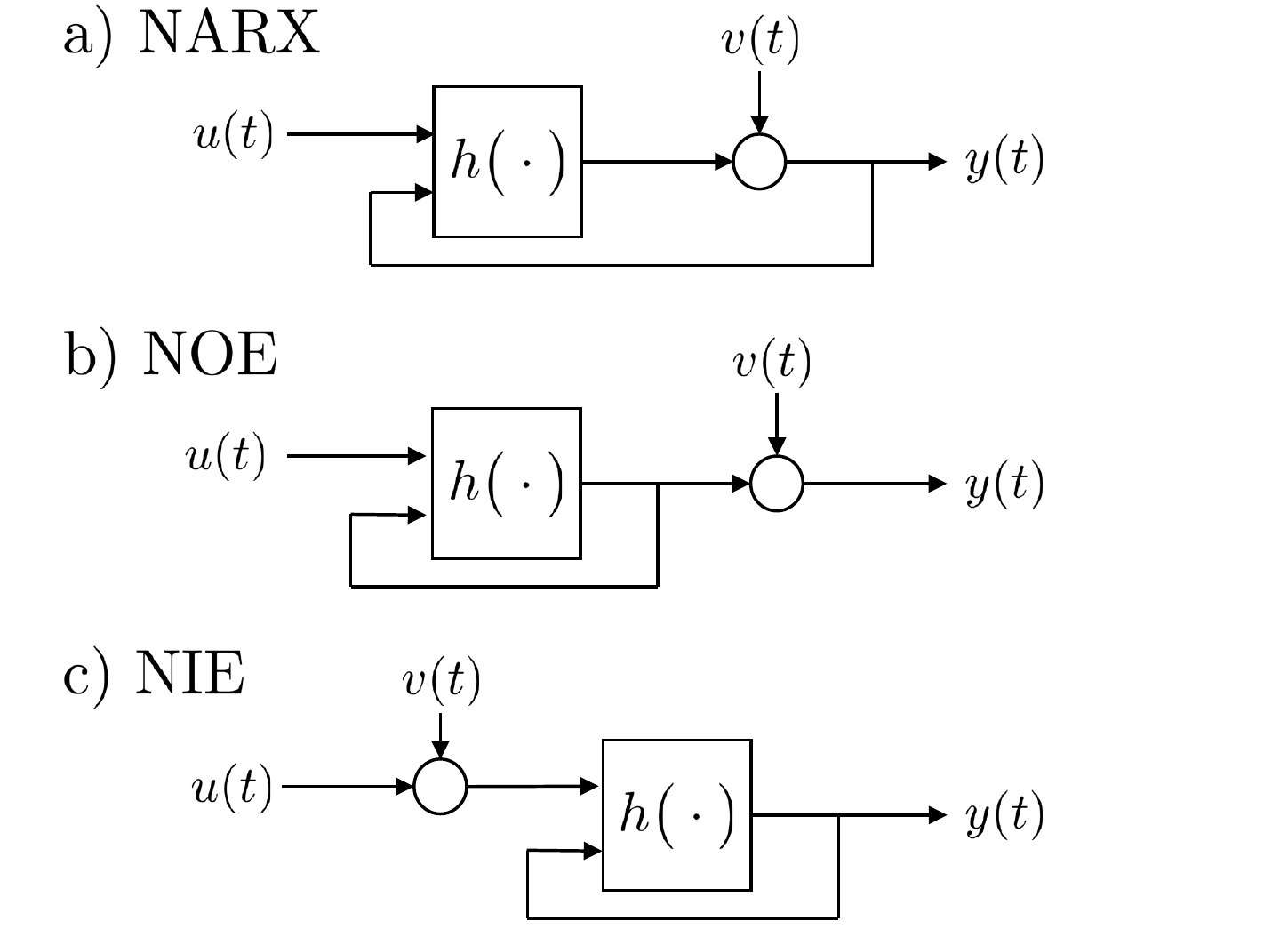}
	\caption{Schematic illustration of the different noise structures typically encountered for high--precision mechatronics. }
	\label{fig:NoiseStructures}
\end{figure}

\subsection{System identification procedure}
Typically, a physics--based model parametrization is derived from first principle modelling.
\begin{definition}
	A physics--based model is defined as
	\begin{equation}
	\label{eq:PhysicsBasedModel}
		\hat{y} \big( \theta_{\textup{phy}}, \phi(t) \big) = f_{\textup{phy}} \big( \theta_{\textup{phy}}, \phi(t) \big),
	\end{equation}
	where $\hat{y} \big( \theta_{\textup{phy}}, \phi(t) \big)$ indicates the prediction of the output $y(t)$, and $\theta_{\textup{phy}}$ are the parameters of the physical model. 
\end{definition}

The parameters $\theta_{\textup{phy}}$ are chosen according to an identification criterion, such as the mean--squared error (MSE). 
\begin{definition}
	The MSE identification criterion is given as
	\begin{equation}
	\label{eq:IdentificationCriterionMSE}
		\theta_{\textup{phy}}^* = \textup{arg} \min_{\theta_{\textup{phy}}} \frac{1}{N} \sum_{t = 0}^{N-1} \left( y(t) - \hat{y} \big( \theta_{\textup{phy}}, \phi(t) \big) \right)^2,
	\end{equation}
	where the summation is taken over a data set $Z^N = \{ \phi(0), y(0), \hdots, \phi(N-1), y(N-1) \}$. 
\end{definition}

Following the same reasoning as for~\eqref{eq:OptimalFeedforwardRewritten}, after identification of the parameters $\theta_{\textup{phy}}^*$, the physics--based feedforward controller is given as
\begin{equation}
\label{eq:FeedforwardPhysicsBased}
	u_{\textup{ff}}(t) = f_{\textup{phy}}^{-1} \big( \phi_{\textup{ff}}' (t) \big), 
\end{equation}
where $f_{\textup{phy}}^{-1}$ indicates the inverse of $f_{\textup{phy}}$, which is assumed known (typically a physics--based analytic formula is used, e.g., inverse linear motion dynamics). 
In order to obtain an implementable physics--based feedforward as in~\eqref{eq:FeedforwardPhysicsBased}, we assume knowledge of the reference up until time $t+n_k+1$ at time $t$, and assume that $f_{\textup{phy}}^{-1}$ is bounded--input bounded--output (BIBO) stable. 
Relevant methods for obtaining a stable inverse of linear systems that have unstable inverses are listed in~\cite{Zundert2017}. 

In general, the physics--based model~\eqref{eq:PhysicsBasedModel} does not capture the actual dynamics $h$ in~\eqref{eq:SystemDynamics}, due to the presence of parasitic friction, electromagnetic distortions, and other effects present in the system~\cite{Nguyen2018}. 
\begin{definition}
	The unmodelled dynamics are defined as $g \big( \phi(t) \big) := h \big( \phi(t) \big) - f_{\textup{phy}} \big( \theta_{\textup{phy}}^*, \phi(t) \big)$, such that the system~\eqref{eq:SystemDynamics} is rewritten into
	\begin{equation}
	\label{eq:SystemDynamicsUnmodelledDynamics}
		y(t) = f_{\textup{phy}} \big( \theta_{\textup{phy}}^*, \phi(t) \big) + g \big( \phi(t) \big) + v(t). 
	\end{equation}
\end{definition}

As suggested in~\cite{Bolderman2021}, in order to identify also the unmodelled dynamics, we augment the physics--based model with a NN to obtain the PGNN. 
\begin{definition}
	A PGNN is defined as
	\begin{equation}
	\label{eq:PGNN}
		\hat{y} \big( \theta, \phi(t) \big) = f_{\textup{phy}} \big( \theta_{\textup{phy}}, \phi(t) \big) + f_{\textup{NN}} \big( \theta_{\textup{NN}}, \phi(t) \big), 
	\end{equation}
	with $\theta := \{ \theta_{\textup{phy}}, \theta_{\textup{NN}} \}$ the PGNN parameters, and $\theta_{\textup{NN}} := \{ W_1, B_1, \hdots, W_{l+1}, B_{l+1} \}$ the NN weights and biases with $l$ the number of hidden layers. The NN output is
	\begin{equation}
	\label{eq:NN}
		f_{\textup{NN}} \big( \theta_{\textup{NN}}, \phi(t) \big) = W_{l+1} \alpha_l \big( \hdots \alpha_1 \big( W_1 \phi(t) + B_1 \big) \big) + B_{l+1}, 
	\end{equation}
	where $\alpha_i$ denotes the aggregation of activation functions of layer $i= 1, \hdots , l$. 
\end{definition}
\begin{remark}
	The PGNN becomes a standard, black--box NN when no physical knowledge of the system is present, i.e., for $f_{\textup{phy}} = 0$ in~\eqref{eq:PGNN}. 
	Therefore, the methodology developed in this paper applies also to black--box NNs, with the exception of the analytic inverse in Section~\ref{sec:AnalyticInverse}. 
\end{remark}

The flexible nature of the NN can create an overparameterization in the PGNN~\eqref{eq:PGNN} when training according to the MSE identification criterion~\eqref{eq:IdentificationCriterionMSE}, which results in a parameter drift during training. 
Therefore, a regularized MSE identification criterion was introduced in~\cite{Bolderman2022}. 
\begin{definition}
	The regularized MSE identification criterion is given as
	\begin{align}
	\begin{split}
	\label{eq:IdentificationCriterionMSERegularized}
		\hat{\theta} = \textup{arg} \min_{\theta} & \frac{1}{N} \sum_{t = 0}^{N-1} \left( y(t) - \hat{y} \big( \theta, \phi(t) \big) \right)^2 \\
		& + \left( \theta_{\textup{phy}}^* - \theta_{\textup{phy}} \right)^T \Lambda \left( \theta_{\textup{phy}}^* - \theta_{\textup{phy}} \right), 
	\end{split}
	\end{align}
	with $\Lambda$ a positive definite matrix, and $\theta_{\textup{phy}}^*$ the solution of~\eqref{eq:IdentificationCriterionMSE} for the physical model contained within the PGNN~\eqref{eq:PGNN}. 
\end{definition}

The majority of literature on (PG)NN--based feedforward control performs a direct inverse identification, see, e.g.,~\cite{Bolderman2021, Zhang2007}. 
Basically, the inverse dynamics is parametrized
\begin{equation}
\label{eq:PGNNInverse}
	\hat{u} \big( \theta, \phi'(t) \big) = f_{\textup{phy}}^{-1} \big( \theta_{\textup{phy}}, \phi'(t) \big) + f_{\textup{NN}} \big( \theta_{\textup{NN}}, \phi'(t) \big),
\end{equation}
where $\hat{u} \big( \theta, \phi(t) \big)$ is the predicted input. 
Then, the parameters $\theta$ are trained according to identification criterion
\begin{align}
\begin{split}
\label{eq:InverseIdentificationCriterionRegularized}
	\hat{\theta} = \textup{arg} \min_{\theta} & \frac{1}{N} \sum_{t = 0}^{N-1} \left( u(t) - \hat{u} \big( \theta, \phi'(t) \big) \right)^2 \\
		& + \left( \theta_{\textup{phy}}^* - \theta_{\textup{phy}} \right)^T \Lambda \left( \theta_{\textup{phy}}^* - \theta_{\textup{phy}} \right).
\end{split}
\end{align} 
In general, if the data is noise free, this approach is more attractive for feedforward control design, as we directly obtain the inverse dynamics. 
However, when the data contains noise, it is observed that parameter estimates are biased, as illustrated in the next section.

\subsection{Inverse identification and induced parameter bias}
For simplicity of exposition, consider a linear ARX system, i.e., $ h \big( \phi(t) \big) = \theta_0^T \phi(t)$ in~\eqref{eq:SystemDynamics}. Then, the inverse system dynamics is given as
\begin{equation}
\label{eq:LinearInverseDynamics}
	u(t) = {\theta_0'}^T \phi'(t) - \frac{1}{\psi_0} v(t+n_k+1),
\end{equation}
with $\psi_0$ the parameter that multiplies $u(t-n_k-1)$ in the forward dynamics, and $\theta_0'$ the parameters obtained after inversion. 
We parametrize~\eqref{eq:LinearInverseDynamics} as
\begin{equation}
\label{eq:LinearInverseDynamicsParameterization}
	\hat{u} \big( \theta', \phi'(t) \big) = {\theta'}^T \phi'(t),
\end{equation}
and identify the parameters according to MSE criterion~\eqref{eq:InverseIdentificationCriterionRegularized} with $\Lambda =0$.
We define $M:=\frac{1}{N} \sum_{t=0}^{N-1} \phi'(t) \phi'(t)^T$, such that the least--squares solution is given as
\begin{equation}
\label{eq:LinearLeastSquaresSolution}
	\hat{\theta}' = M^{-1} \frac{1}{N} \sum_{t = 0}^{N-1} \phi'(t) u(t), 
\end{equation}
where $M$ must be non--singular to return a unique estimate, i.e., the data set $Z^N$ must be persistently exciting. 
Substitution of~\eqref{eq:LinearInverseDynamics} in~\eqref{eq:LinearLeastSquaresSolution} and computing the bias for $N \to \infty$ gives
\begin{align}
\begin{split}
\label{eq:LinearLeastSquaresBias}
	\lim_{N \to \infty} \hat{\theta}'-\theta_0' = &\lim_{N \to \infty} M^{-1}   \frac{1}{N} \sum_{t = 0}^{N-1} \phi'(t) \frac{1}{\psi_0} v(t+n_k+1) .
\end{split}
\end{align}
The bias is nonzero, since $M$, and therefore also $M^{-1}$, is non--singular and the second term is nonzero due to the correlation between $y(t+n_k+1)$ in $\phi'(t)$ and $v(t+n_k+1)$.


\section{PROBLEM STATEMENT}
\label{sec:ProblemStatement}
Since direct inverse system identification results in a biased estimate in the presence of noisy data, even when the system follows a basic linear ARX structure, model--based feedforward controller design from noisy data remains an open problem. 

Our aim is to develop a systematic PGNN feedforward controller design procedure based on the following two steps:
\begin{enumerate}
	\item \emph{Consistent parameter estimation:} the PGNN consistently identifies the forward dynamics, including the unmodelled dynamics $g \big( \phi(t) \big)$ in the presence of noise;
	\item \emph{System inversion:} in order to derive the PGNN feedforward, the identified forward dynamics must be either analytically or numerically invertible. 
\end{enumerate}


\section{CONSISTENT PGNN IDENTIFICATION}
\label{sec:ConsistentEstimation}
Since the universal approximation theorem for NNs holds only within a compact domain~\cite{Hornik1991}, we define the operating conditions of the feedforward controller.
\begin{definition}
	The operating conditions $\mathcal{R}$ are defined as all possible regressors provided to the PGNN, such that
	\begin{equation}
	\label{eq:OperatingConditions}
		\phi_{\textup{ff}} (t), \phi(t) \in \mathcal{R},
	\end{equation}
	for all $t$, all references supplied to the PGNN feedforward, and all regressors in the data set $Z^N$. 
\end{definition}

Then, following the fundamental framework in~\cite{Ljung1999}, it is possible to obtain consistent estimates of the system~\eqref{eq:SystemDynamics}. 
\begin{definition}
	A parameter estimate $\hat{\theta}$ of $\theta^*$ is consistent if $\hat{\theta} \to \theta^*$ for $N \to \infty$ with probability $1$. 
\end{definition}

We adopt the following common assumptions on the model, data, and training to prove consistency for the PGNN identification. 
\begin{assumption}
\label{as:ModelSet}
	There exists a $\theta_{\textup{NN}}^*$ such that $f_{\textup{NN}} \big( \theta_{\textup{NN}}, \phi(t) \big) = g \big( \phi(t) \big)$ for all $\phi(t) \in \mathcal{R}$. 	
\end{assumption}
\begin{assumption}
\label{as:PersistenceOfExcitation}
	For $\theta_{\textup{NN}}^A \neq \theta_{\textup{NN}}^B$ with $f_{\textup{NN}} \big( \theta_{\textup{NN}}^A, \phi(t) \big) \neq f_{\textup{NN}} \big( \theta_{\textup{NN}}^B, \phi(t) \big)$ for some $\phi(t) \in \mathcal{R}$, we have
	\begin{equation}
	\label{eq:PersistenceOfExcitation}
		\frac{1}{N} \sum_{t = 0}^{N-1} \left( f_{\textup{NN}} \big( \theta_{\textup{NN}}^A, \phi(t) \big) - f_{\textup{NN}} \big( \theta_{\textup{NN}}^B, \phi(t) \big) \right)^2 > 0. 
	\end{equation}
\end{assumption}
\begin{assumption}
\label{as:GlobalOptimum}
	The optimization over $\theta$ of the identification criterion~\eqref{eq:IdentificationCriterionMSERegularized} yields a global optimum.
\end{assumption}

\begin{proposition}
\label{prop:PGNNConsistencyARX}
	Consider the PGNN~\eqref{eq:PGNN} that is used to identify the NARX system~\eqref{eq:SystemDynamics} according to identification criterion~\eqref{eq:IdentificationCriterionMSERegularized}. 
	Suppose that Assumptions~\ref{as:ModelSet}, \ref{as:PersistenceOfExcitation} and~\ref{as:GlobalOptimum} hold.
	Then, for $N \to \infty$, the identified PGNN parameters satisfy $\hat{\theta} = \{ \hat{\theta}_{\textup{phy}}, \hat{\theta}_{\textup{NN}} \} \to \{ \theta_{\textup{phy}}^*, \theta_{\textup{NN}}^* \}$. 
\end{proposition}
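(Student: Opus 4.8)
\emph{Proof approach.} The plan is to specialize the classical prediction--error consistency argument of~\cite{Ljung1999} to the one--step--ahead predictor $\hat{y}\big( \theta, \phi(t) \big)$ of the PGNN~\eqref{eq:PGNN}. First I would use Assumption~\ref{as:ModelSet} to rewrite the NARX system~\eqref{eq:SystemDynamicsUnmodelledDynamics} as $y(t) = \hat{y}\big( \theta^*, \phi(t) \big) + v(t)$ with $\theta^* := \{ \theta_{\textup{phy}}^*, \theta_{\textup{NN}}^* \}$, so that the prediction error at $\theta^*$ is exactly the white noise $v(t)$. Writing, for an arbitrary $\theta$,
\begin{equation*}
	y(t) - \hat{y}\big( \theta, \phi(t) \big) = v(t) + \Delta(t,\theta), \qquad \Delta(t,\theta) := \hat{y}\big( \theta^*, \phi(t) \big) - \hat{y}\big( \theta, \phi(t) \big),
\end{equation*}
the regularized criterion~\eqref{eq:IdentificationCriterionMSERegularized} expands into $V_N(\theta) = \tfrac{1}{N}\sum_{t} v(t)^2 + \tfrac{2}{N}\sum_{t} v(t)\Delta(t,\theta) + \tfrac{1}{N}\sum_{t} \Delta(t,\theta)^2 + \big( \theta_{\textup{phy}}^* - \theta_{\textup{phy}} \big)^T \Lambda \big( \theta_{\textup{phy}}^* - \theta_{\textup{phy}} \big)$.

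Next I would pass to the limit $N\to\infty$ and identify the asymptotic criterion $\bar{V}(\theta)$. The first term tends to $\sigma_v^2$ by definition. The cross term tends to $0$ with probability $1$, and this is where the NARX structure is essential: $\Delta(t,\theta)$ depends only on $\phi(t)$, which contains the past outputs and inputs and hence only the noise samples $v(t-1), v(t-2), \ldots$, so that $\{ v(t)\Delta(t,\theta) \}$ is a martingale--difference sequence to which a strong law of large numbers applies (uniformly in $\theta$ on the compact parameter set, using continuity of $\hat{y}$ in $\theta$). The third term converges to a nonnegative quantity and the regularization term does not depend on $N$, so $\bar{V}(\theta) = \sigma_v^2 + \lim_{N\to\infty}\tfrac{1}{N}\sum_{t}\Delta(t,\theta)^2 + \big( \theta_{\textup{phy}}^* - \theta_{\textup{phy}} \big)^T \Lambda \big( \theta_{\textup{phy}}^* - \theta_{\textup{phy}} \big) \ge \sigma_v^2$, with equality only if the last two terms both vanish. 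Since $\Lambda$ is positive definite, the quadratic term forces $\theta_{\textup{phy}} = \theta_{\textup{phy}}^*$; given this, $\Delta(t,\theta) = f_{\textup{NN}}\big( \theta_{\textup{NN}}^*, \phi(t) \big) - f_{\textup{NN}}\big( \theta_{\textup{NN}}, \phi(t) \big)$, and the middle term vanishes precisely when this difference is zero on $\mathcal{R}$, which by the contrapositive of Assumption~\ref{as:PersistenceOfExcitation}, cf.~\eqref{eq:PersistenceOfExcitation}, identifies $\theta_{\textup{NN}}$ with $\theta_{\textup{NN}}^*$. Hence $\theta^*$ is the unique minimizer of $\bar{V}$.

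Finally, I would invoke the uniform convergence result of~\cite{Ljung1999}: together with continuity of the predictor in $\theta$ and compactness of the parameter set, it permits interchanging the limit and the minimizing argument, so that the global minimizer of $V_N$ converges with probability $1$ to the minimizer of $\bar{V}$, while Assumption~\ref{as:GlobalOptimum} guarantees that the optimization actually returns this global minimizer, giving $\hat{\theta} \to \theta^*$. I expect the main obstacle to be precisely this last step, i.e., upgrading the pointwise almost--sure limits above to a \emph{uniform}--in--$\theta$ convergence of $V_N$ to $\bar{V}$, which requires a mild regularity/equicontinuity argument over the compact parameter set rather than a direct calculation. A secondary subtlety is that Assumption~\ref{as:PersistenceOfExcitation} determines $\theta_{\textup{NN}}$ only up to the usual symmetries of the NN parametrization (permutations of neurons and, for odd activations, sign flips), so one strictly obtains convergence to the set of equivalent global minimizers; identifying $\theta_{\textup{NN}}^*$ itself requires working modulo these symmetries or adding an injectivity assumption on $\theta_{\textup{NN}} \mapsto f_{\textup{NN}}(\theta_{\textup{NN}}, \cdot)$.
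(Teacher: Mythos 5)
Your proof follows essentially the same route as the paper: substitute the system dynamics and the PGNN into the regularized criterion, pull the white noise out of the MSE term (the vanishing cross term, which the paper justifies by uncorrelatedness of $v(t)$ with $\phi(t)$ and you justify more carefully via a martingale--difference SLLN), lower--bound the asymptotic cost by $\sigma_v^2$, and conclude $\theta_{\textup{phy}} = \theta_{\textup{phy}}^*$ from the positive definite regularizer and $\theta_{\textup{NN}} = \theta_{\textup{NN}}^*$ from Assumption~\ref{as:PersistenceOfExcitation}. Your added remarks on uniform convergence and on the NN parametrization symmetries are valid refinements of points the paper's terser argument glosses over, but they do not change the underlying approach.
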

\begin{proof}
	The proof follows the approach in~\cite{Ljung1976}, by showing that the globally minimizing argument of the cost function corresponds to a consistent estimate. 
	Substitution of the system dynamics~\eqref{eq:SystemDynamicsUnmodelledDynamics} and the PGNN~\eqref{eq:PGNN} into the identification criterion~\eqref{eq:IdentificationCriterionMSERegularized} for $N \to \infty$ gives the cost function
	\begin{align}
	\begin{split}
	\label{eq:Proof1Step1}
		\lim_{N \to \infty} & \frac{1}{N} \sum_{t = 0}^{N-1} \bigg( f_{\textup{phy}} \big( \theta_{\textup{phy}}^*, \phi(t) \big) - f_{\textup{phy}} \big( \theta_{\textup{phy}}, \phi(t) \big) \\
		& + g \big( \phi(t) \big) - f_{\textup{NN}} \big( \theta_{\textup{NN}}, \phi(t) \big) \bigg)^2 + \sigma_v^2 \\
		& + (\theta_{\textup{phy}}^* - \theta_{\textup{phy}}) \Lambda ( \theta_{\textup{phy}}^* - \theta_{\textup{phy}} ) \geq \sigma_v^2, 
	\end{split}
	\end{align}
	where $\sigma_v^2$ occurs from $v(t)$ which is taken outside of the MSE term, since it is zero mean white noise and uncorrelated with the regressor $\phi(t)$. 
	Since the MSE and regularization terms are non--negative, the inequality in~\eqref{eq:Proof1Step1} holds with equality only if $\theta_{\textup{phy}} = \theta_{\textup{phy}}^*$ (regularization term), and $\theta_{\textup{NN}} = \theta_{\textup{NN}}^*$ (MSE term after substitution of $\theta_{\textup{phy}} = \theta_{\textup{phy}}^*$). 
\end{proof}

\begin{remark}
	It follows from Proposition~\ref{prop:PGNNConsistencyARX} that $f_{\textup{phy}} \big( \hat{\theta}_{\textup{phy}}, \phi(t) \big) + f_{\textup{NN}} \big( \hat{\theta}_{\textup{NN}}, \phi(t) \big) = h \big( \phi(t) \big)$ for all $\phi(t) \in \mathcal{R}$. 
	Therefore, the identified PGNN perfectly replicates the system under the listed assumptions.
\end{remark}

Note that, even though Assumptions~\ref{as:ModelSet}, \ref{as:PersistenceOfExcitation}, \ref{as:GlobalOptimum} may not hold in general, the result of Proposition~\ref{prop:PGNNConsistencyARX} offers an additional reliability for the forward identification approach compared to the inverse identification.
Indeed, in the latter approach a bias is present, due to the correlation between $\phi'(t)$ and $v(t)$, and the fact that the noise $v(t)$ therefore cannot be taken out of the MSE term.

\begin{remark}
\label{re:1OE}
	Consider the system with NOE noise~\eqref{eq:SystemDynamicsOE}, see also Fig.~\ref{fig:NoiseStructures}. 
	Then, consistency is obtained by using the PGNN~\eqref{eq:PGNN} with $\phi(t) = [\hat{y} \big( \theta, \phi(t-1) \big), \hdots, \hat{y} \big( \theta, \phi(t-n_a) \big), u(t-n_k-1), \hdots, u(t-n_k-n_b)]^T$, provided that the initial conditions, i.e., $y(t)-v(t)$ and $u(t)$ for $t<0$, are known and $h$ is stable.
	The proof follows similar to the proof of Proposition~\ref{prop:PGNNConsistencyARX}. 
\end{remark}

\begin{remark}
\label{re:2IE}
	Consider the system with NIE noise~\eqref{eq:SystemDynamicsIE}, see also Fig.~\ref{fig:NoiseStructures}. 
	Then, the responses caused by the input $u(t)$ and the noise $v(t)$ cannot be distinguised.
	In this situation, a direct inverse identification is beneficial, due to its equivalence to forward identification with output noise. 
	Under the assumption that the inverse dynamics $h^{-1}$ in~\eqref{eq:OptimalFeedforwardRewritten} exists and is stable, identification according to criterion~\eqref{eq:InverseIdentificationCriterionRegularized} with inverse PGNN~\eqref{eq:PGNNInverse} and $\phi'(t) = [y(t+n_k+1), \hdots, y(t+n_k-n_a+1), \hat{u} \big( \theta, \phi'(t-1) \big), \hdots, \hat{u} \big( \theta, \phi' (t-n_b+1) \big)]^T$ yields consistent estimates. 
\end{remark}

\begin{remark}
\label{re:3CL}
	Assume that the data $Z^N$ is generated under closed--loop operation, e.g., using a linear feedback
	\begin{equation}
	\label{eq:ClosedLoopOperation}
		u(t) = C \big( q^{-1} \big) \big( r(t) - y(t) \big) + \Delta u(t),
	\end{equation}
	where $C(q^{-1})$ is the transfer function of the feedback controller, $q^{-1}$ the backwards shift operator, and $\Delta u(t)$ the excitation signal on the input used during data generation.
	The feedback controller introduces a correlation between the input and output and therefore, the noise as well. 
	However, the proof of Proposition~\ref{prop:PGNNConsistencyARX} remains valid, since $\phi(t)$ and $v(t)$ are uncorrelated due to the fact that $u(t-n_k-1)$ is the most recent input in $\phi(t)$.
\end{remark}

From Remarks~\ref{re:1OE}, \ref{re:2IE}, \ref{re:3CL} it becomes apparent that, in order to obtain consistent estimates, the regressor $\phi(t)$ must be chosen to appropriately account for where the noise enters the system. 
There are several approaches that can be used to remove the closed--loop induced bias when the regressor is not, or cannot be chosen appropriately. 
One example is the instrumental variable (IV) approach, which is for example used in~\cite{Nguyen2018a}. Therein, it was shown that a bias--correction factor was required to obtain consistent estimates, which required specific knowledge of the noise distribution, variance, and structure. 
Recently, the IV approach was also applied to NN--based identification in~\cite{Kon2022a}.


\section{FEEDFORWARD CONTROLLER DESIGN}
\label{sec:FeedforwardControllerDesign}

\subsection{Gradient--based inversion}
Since the PGNN~\eqref{eq:PGNN} is not analytically invertible in general, we employ a gradient--based inversion method to obtain the feedforward control signal. 
First, we shift~\eqref{eq:PGNN} $n_k+1$ steps forward in time and substitute $\theta = \hat{\theta}$. 
Let us define $V_R \big( u_{\textup{ff}} (t) \big)$ as the difference between the reference $r(t+n_k+1)$ and the predicted output $\hat{y} \big( \hat{\theta}, \phi_{\textup{ff}} (t+n_k+1) \big)$ for feedforward $u_{\textup{ff}}(t)$.
Then, $u_{\textup{ff}}(t)$ must satisfy
\begin{equation}
\label{eq:FeedforwardGradientBasedProblem}
	V_R \big( u_{\textup{ff}} (t) \big) := r(t+n_k+1) - \hat{y} \big( \hat{\theta}, \phi_{\textup{ff}}(t+n_k+1) \big) = 0.
\end{equation}
The gradient of $V_R \big( u_{\textup{ff}}(t) \big)$ is given as
\begin{align}
\begin{split}
\label{eq:FeedforwardGradientBasedProblemGradient}
	\frac{\partial V_R \big( u_{\textup{ff}}(t) \big)}{\partial u_{\textup{ff}}(t)} =& - \frac{\partial f_{\textup{phy}} \big( \hat{\theta}_{\textup{phy}}, \phi_{\textup{ff}}(t+n_k+1) \big)}{\partial u_{\textup{ff}}(t) } \\
	& - \frac{\partial f_{\textup{NN}} \big( \hat{\theta}_{\textup{NN}}, \phi_{\textup{ff}}(t+n_k+1)}{\partial u_{\textup{ff}}(t) }, 
\end{split}
\end{align}
where the first term is derived from the known physics--based model, and backpropagation gives
\begin{align}
\begin{split}
\label{eq:Backpropagation}
	& \frac{\partial f_{\textup{NN}} \big( \hat{\theta}_{\textup{NN}}, \phi_{\textup{ff}}(t+n_k+1) \big)}{\partial u_{\textup{ff}} (t) } \\
	&  = \hat{W}_{l+1} \beta_l \big( x_{l-1}(t) \big) \hdots \hat{W}_1 \frac{\partial \phi_{\textup{ff}}(t+n_k+1)}{\partial u_{\textup{ff}}(t) } ,
\end{split}
\end{align}
with $\beta_i \big(x_{i-1}(t) \big) := \frac{\partial \alpha_i(x)}{\partial x} \big|_{x = x_{i-1}(t)}$ with $x_{i-1}$ the output of layer $i-1$ in the NN.
Then, a gradient--based iterative search is performed to find the feedforward input $u_{\textup{ff}}(t)$ based on Algorithm~\ref{alg:FeedforwardGradientBased}.

\begin{algorithm}
\caption{Search algorithm for feedforward $u_{\textup{ff}}(t)$.}\label{alg:FeedforwardGradientBased}
\begin{algorithmic}
\State \textbf{Initialize} $u_{\textup{ff}}^{(0)}(t) = f_{\textup{phy}}^{-1} \big( \theta_{\textup{phy}}^*, \phi_{\textup{ff}}'(t) \big)$,
\For{ $i \in \{ 1, \hdots , k \}$}
	\State Compute $V_R'\big(u_{\textup{ff}}^{(i-1)}(t) \big) = \frac{\partial V_R \big(u_{\textup{ff}}(t) \big)}{\partial u_{\textup{ff}}(t)} \big|_{u_{\textup{ff}}(t) = u_{\textup{ff}}^{(i-1)}(t)}$,
	\State Update $u_{\textup{ff}}^{(i)} = u_{\textup{ff}}^{(i-1)}(t) - \frac{V_R' \big( u_{\textup{ff}}^{(i-1)}(t) \big)}{V_R \big( u_{\textup{ff}}^{(i-1)}(t) \big)}$.
\EndFor
\State \textbf{Return} $u_{\textup{ff}}(t) = \textup{arg} \min_{i \in \{0, \hdots , k \}} \big| V_R \big( u_{\textup{ff}}^{(i)}(t) \big) \big|$. 
\end{algorithmic}
\end{algorithm}

In Algorithm~\ref{alg:FeedforwardGradientBased} the search is started from the physics--based feedforward, because it is often close to the optimal feedforward for the PGNN, i.e., the output of the NN is small compared to the physics--based model, see~\cite{Bolderman2021}. 
The updates are performed using the Newton--Raphson method, since it generally converges in a limited number of iterations, but other optimization methods can also be used.
Generally, it is desired to have the number of iterations $k$ large, to ensure that the solver converges. However, for real--time implementation, the number of updates is limited by the computation time, hardware, and sampling time of the system. 
If the solver must be stopped before it converges, the last solution may not be optimal compared to previous interations. Hence, in the last step of Algorithm~\ref{alg:FeedforwardGradientBased}, we output the best feedforward signal over all iterations performed. 
Actuator limitations can be accommodated for by limiting the search within a specified domain, or by saturating the feedforward input $u_{\textup{ff}}(t)$.

\subsection{Analytical inversion}
\label{sec:AnalyticInverse}
Inversion of the PGNN~\eqref{eq:PGNN} is obstructed by the fact that the most recent input $u(t-n_k-1)$ passes through both the physical model, as well as the NN. 
This problem can be circumvented for a specific class of PGNNs, for which the NN does not use this input, i.e.,
\begin{align}
\begin{split}
\label{eq:PGNNInvertible}
	\hat{y} \big( \theta, \phi(t) \big) = f_{\textup{phy}} \big( \theta_{\textup{phy}}, \phi(t) \big) + f_{\textup{NN}} \left( \theta_{\textup{NN}}, \begin{bmatrix} \phi_y(t) \\ \phi_u(t) \end{bmatrix} \right),
\end{split}
\end{align}
where $\phi_y = [y(t-1), \hdots, y(t-n_a)]^T$ and $\phi_u (t) = [u(t-n_k-2), \hdots, u(t-n_k-n_b)]^T$, such that $\phi(t) = [\phi_y(t)^T, u(t-n_k-1), \phi_u(t)^T]^T$. 
Then, similar to the physics--based feedforward~\eqref{eq:FeedforwardPhysicsBased}, the PGNN feedforward obtained from~\eqref{eq:PGNNInvertible} is given as
\begin{align}
\begin{split}
\label{eq:PGNNInvertibleFeedforward}
	u_{\textup{ff}} (t) & = f_{\textup{phy}}^{-1} \left( \hat{\theta}_{\textup{phy}}, \phi_{\textup{ff}}'- \Delta_f (t) \right), \\
	\Delta_f(t) & = \begin{bmatrix} 1 \\ 0 \\ \vdots \\ 0 \end{bmatrix} f_{\textup{NN}} \left( \hat{\theta}_{\textup{NN}}, \begin{bmatrix} r(t+n_k) \\ \vdots \\ r(t+n_k-n_a+1) \\ u_{\textup{ff}}(t-1) \\ \vdots \\ u_{\textup{ff}}(t+n_k-n_b+1) \end{bmatrix} \right).
\end{split}
\end{align}

In the remainder of this section, we show that the PGNN~\eqref{eq:PGNNInvertible} can yield consistent estimates. 
First, we choose $\theta_{\textup{phy}} = [\zeta^T, \psi^T]^T$, where $\psi$ are the parameters that affect $u(t-n_k-1)$, and $\zeta$ are the remaining parameters. 
Then, we assume that the physical model is able to capture the effect of the most recent input. 
\begin{assumption}
\label{as:GoodEnoughPhysicsBased}
	There exists a $\psi_0$ such that the unmodelled dynamics $g \big( \phi(t) \big)$ does not depend on $u(t-n_k-1)$. 
\end{assumption}

Correspondingly, if we define $\bar{\theta}_{\textup{phy}} = [{\zeta^*}^T, \psi_0^T]^T$, the dynamics~\eqref{eq:SystemDynamics} can be rewritten into
\begin{equation}
\label{eq:SystemDynamicsUnmodelledIndependentU}
	y(t) = f_{\textup{phy}} \big( \bar{\theta}_{\textup{phy}}, \phi(t) \big) + g \left( \begin{bmatrix} \phi_y(t) \\ \phi_u(t) \end{bmatrix} \right) + v(t). 
\end{equation}
Since $\psi^*$ resulting from the MSE identification~\eqref{eq:IdentificationCriterionMSE} generally differs from $\psi_0$, it is ommitted in the regularization, i.e.,
\begin{align}
\begin{split}
\label{eq:IdentificationCriterionMSERegularizedOnlyZeta}
	\hat{\theta} = \textup{arg} \min_{\theta}&  \frac{1}{N} \sum_{t = 0}^{N-1} \left( y(t) - \hat{y} \big( \theta, \phi(t) \big) \right)^2 \\
	& + (\zeta^* - \zeta)^T \Lambda (\zeta^* - \zeta).
\end{split}
\end{align}
Finally, an additional assumption is required on the data, since we no longer penalize $\psi^*-\psi$ in the cost function.
\begin{assumption}
\label{as:PersistenceOfExcitationPhysics}
	For some $\psi^A \neq \psi^B$ with $f_{\textup{phy}} \big( [{\zeta^*}^T, {\psi^A}^T]^T, \phi(t) \big) \neq f_{\textup{phy}} \big( [{\zeta^*}^T, {\psi^A}^T]^T, \phi(t) \big)$ for some $\phi(t) \in \mathcal{R}$, we have that
	\begin{equation}
	\label{eq:PersistenceOfExcitationPhysics}
		\frac{1}{N} \sum_{t =0}^{N-1} \left( f_{\textup{phy}} \left( \begin{bmatrix} \zeta^* \\ \psi^A \end{bmatrix}, \phi(t) \right) -f_{\textup{phy}} \left( \begin{bmatrix} \zeta^* \\ \psi^B \end{bmatrix}, \phi(t) \right) \right)^2 > 0. 
	\end{equation}
\end{assumption}

\begin{proposition}
	Consider the PGNN~\eqref{eq:PGNNInvertible} that is used to identify the system~\eqref{eq:SystemDynamics} according to identification criterion~\eqref{eq:IdentificationCriterionMSERegularizedOnlyZeta}. 
	Suppose that Assumptions~\ref{as:ModelSet}, \ref{as:PersistenceOfExcitation}, \ref{as:GlobalOptimum}, \ref{as:GoodEnoughPhysicsBased}, and~\ref{as:PersistenceOfExcitationPhysics} hold.
	Then, for $N \to \infty$, the PGNN parameters are identified as $\hat{\theta} = \{ \hat{\theta}_{\textup{phy}}, \hat{\theta}_{\textup{NN}} \} = \{ \bar{\theta}_{\textup{phy}}, \theta_{\textup{NN}}^* \}$.
\end{proposition}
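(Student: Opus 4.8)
The plan is to replay the argument of the proof of Proposition~\ref{prop:PGNNConsistencyARX}, adapted to the split parametrization $\theta_{\textup{phy}} = [\zeta^T, \psi^T]^T$ and the reduced regularization in~\eqref{eq:IdentificationCriterionMSERegularizedOnlyZeta}. First I would substitute the rewritten dynamics~\eqref{eq:SystemDynamicsUnmodelledIndependentU} and the invertible PGNN~\eqref{eq:PGNNInvertible} into the cost of~\eqref{eq:IdentificationCriterionMSERegularizedOnlyZeta} and pass to the limit $N \to \infty$. Since $u(t-n_k-1)$ is still the most recent input appearing in $\phi(t)$, the noise $v(t)$ is zero mean white and uncorrelated with $\phi(t)$ --- also under the closed--loop data generation of Remark~\ref{re:3CL} --- so it can be pulled out of the mean--squared term exactly as in~\eqref{eq:Proof1Step1}. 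Writing $\tilde{\phi}(t) := [\phi_y(t)^T, \phi_u(t)^T]^T$ for the reduced regressor seen by the NN, the limiting cost then equals $\sigma_v^2$ plus the nonnegative regularization term $(\zeta^* - \zeta)^T \Lambda (\zeta^* - \zeta)$ plus the nonnegative average of the squared residual $f_{\textup{phy}}(\bar{\theta}_{\textup{phy}}, \phi(t)) - f_{\textup{phy}}([\zeta^T, \psi^T]^T, \phi(t)) + g(\tilde{\phi}(t)) - f_{\textup{NN}}(\theta_{\textup{NN}}, \tilde{\phi}(t))$, and is hence bounded below by $\sigma_v^2$, with the bound attained at $\theta = \{\bar{\theta}_{\textup{phy}}, \theta_{\textup{NN}}^*\}$ by Assumptions~\ref{as:ModelSet} and~\ref{as:GoodEnoughPhysicsBased}.

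It then remains to show that this minimizer is, up to the usual excitation caveats, the only one. Positive definiteness of $\Lambda$ forces $\zeta = \zeta^*$, which annihilates the regularization term and, inside the residual, replaces $f_{\textup{phy}}([\zeta^T, \psi^T]^T, \cdot)$ by $f_{\textup{phy}}([{\zeta^*}^T, \psi^T]^T, \cdot)$. Attaining the lower bound now requires the averaged squared residual to vanish, i.e.\ (in the limit, invoking the excitation Assumptions~\ref{as:PersistenceOfExcitation} and~\ref{as:PersistenceOfExcitationPhysics} to promote ``on the data'' to ``on $\mathcal{R}$'')
\[
f_{\textup{phy}}(\bar{\theta}_{\textup{phy}}, \phi) - f_{\textup{phy}}([{\zeta^*}^T, \psi^T]^T, \phi) = f_{\textup{NN}}(\theta_{\textup{NN}}, \tilde{\phi}) - g(\tilde{\phi}), \qquad \forall \phi \in \mathcal{R}.
\]
The decisive point --- where Assumption~\ref{as:GoodEnoughPhysicsBased} enters --- is that the right--hand side is independent of $u(t-n_k-1)$, while the left--hand side is a difference of two physical models that agree in $\zeta^*$ and differ only through the parameters $\psi_0$ versus $\psi$ acting on $u(t-n_k-1)$. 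Hence the left--hand side must itself be independent of $u(t-n_k-1)$, which, together with the identifiability of $\psi$ on $\mathcal{R}$ guaranteed by Assumption~\ref{as:PersistenceOfExcitationPhysics}, forces $\psi = \psi_0$; consequently the left--hand side vanishes identically and $\hat{\theta}_{\textup{phy}} = \bar{\theta}_{\textup{phy}}$.

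Having established $\theta_{\textup{phy}} = \bar{\theta}_{\textup{phy}}$, the residual is then $g(\tilde{\phi}) - f_{\textup{NN}}(\theta_{\textup{NN}}, \tilde{\phi})$, whose averaged square vanishes at $\theta_{\textup{NN}}^*$ by Assumption~\ref{as:ModelSet} and, by Assumption~\ref{as:PersistenceOfExcitation}, only there; Assumption~\ref{as:GlobalOptimum} then delivers $\hat{\theta} \to \{\bar{\theta}_{\textup{phy}}, \theta_{\textup{NN}}^*\}$ as $N \to \infty$, which is the claim. I expect the main obstacle to be the decoupling step in the second paragraph: one has to argue rigorously that the residual $u(t-n_k-1)$--dependence cannot be absorbed into the NN term --- precisely the role of Assumption~\ref{as:GoodEnoughPhysicsBased} --- and, more delicately, that a genuine mismatch in $\psi$ leaves a trace that no $u(t-n_k-1)$--independent function can cancel, so that Assumption~\ref{as:PersistenceOfExcitationPhysics} actually pins down $\psi = \psi_0$ rather than merely some $\psi$ for which the physics mismatch happens to be $u(t-n_k-1)$--independent.
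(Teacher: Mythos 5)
Your proposal follows essentially the same route as the paper's proof: substitute \eqref{eq:SystemDynamicsUnmodelledIndependentU} and \eqref{eq:PGNNInvertible} into \eqref{eq:IdentificationCriterionMSERegularizedOnlyZeta}, pull the white noise out to get the $\sigma_v^2$ lower bound, force $\zeta=\zeta^*$ via the regularizer, and then argue that the MSE term pins down $\{\psi,\theta_{\textup{NN}}\}=\{\psi_0,\theta_{\textup{NN}}^*\}$ because the NN does not take $u(t-n_k-1)$ as input and so cannot absorb a mismatch in $\psi$. The ``delicate point'' you flag at the end (that Assumption~\ref{as:PersistenceOfExcitationPhysics} must rule out a $\psi\neq\psi_0$ whose physics mismatch happens to be $u(t-n_k-1)$--independent) is real, but the paper treats it at the same level of informality you do, so there is no substantive difference.
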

\begin{proof}
	The proof follows similarly to the proof of Proposition~\ref{prop:PGNNConsistencyARX}, i.e., we substitute~\eqref{eq:SystemDynamicsUnmodelledIndependentU} and~\eqref{eq:PGNNInvertible} into~\eqref{eq:IdentificationCriterionMSERegularizedOnlyZeta} and take $v(t)$ out of the MSE term to obtain the cost function
	\begin{align}
	\begin{split}
	\label{eq:Proof2Step1}
		\lim_{N \to \infty}&  \frac{1}{N} \sum_{t=0}^{N-1} \bigg( f_{\textup{phy}} \big( \bar{\theta}_{\textup{phy}} , \phi(t) \big) - f_{\textup{phy}} \big( \theta_{\textup{phy}} \big( \theta_{\textup{phy}}, \phi(t) \big) \\
		& + g \left( \begin{bmatrix} \phi_y(t) \\ \phi_u(t) \end{bmatrix} \right) - f_{\textup{NN}} \left( \theta_{\textup{NN}} , \begin{bmatrix} \phi_y(t) \\ \phi_u(t) \end{bmatrix} \right) \bigg)^2 + \sigma_v^2 \\ 
		& + (\zeta^* - \zeta)^T \Lambda ( \zeta^* - \zeta) \geq \sigma_v^2. 
	\end{split}
	\end{align}
	In~\eqref{eq:Proof2Step1} the equality holds only if $\zeta = \zeta^*$ (regularization term), and $\{ \psi, \theta_{\textup{NN}} \} = \{ \psi_0, \theta_{\textup{NN}}^* \}$ (MSE term, substitute $\zeta = \zeta^*$ and observe that, when $\psi \neq \psi_0$ the physical model mismatch in the MSE cannot be compensated for by the NN, since it does not take $u(t-n_k-1)$ as input). 
\end{proof}

\begin{figure}
\centering
\includegraphics[width=1.0\linewidth]{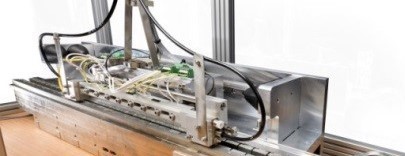}
\caption{Experimental coreless linear motor setup.}
\label{fig:CLM}
\end{figure}

\section{EXPERIMENTAL VALIDATION}
\label{sec:ExperimentalValidation}
Effectiveness of the developed PGNN feedforward controllers is validated on the problem of closed--loop position control for the real--life coreless linear motor (CLM) in Fig.~\ref{fig:CLM} that is also considered in~\cite{Bolderman2022}, see~\cite{Bolderman2021} for details on the CLM and the feedback controller. 
Data is generated by sampling the input and output at a frequency of $1$ $kHz$ for the duration of $120$ $s$ while exciting the CLM with a normally distributed white noise on the input $\Delta u(t) \sim \mathcal{N} (0, 50^2)$ $N$ in combination with a third order reference $r(t)$ that oscillates in $r(t) \in \{-0.1, 0.1 \}$ $m$ with maximum velocity $\max \big( |\dot{r}(t)| \big) = 0.05$ $\frac{m}{s}$, acceleration $\max \big( | \ddot{r}(t) | \big) = 4$ $\frac{m}{s^2}$, and jerk $\max \big( | \dddot{r}(t) | \big) = 1000$ $\frac{m}{s^3}$.

The following physics--based model is derived using Newton's second law
\begin{equation}
\label{eq:CLMDynamics}
	\delta^2 y(t) = - \frac{f_v}{m} \delta y(t) -\frac{f_c}{m} \textup{sign} \big( \delta y(t) \big) + \frac{1}{m} u(t) ,
\end{equation}
with $\delta$ a discrete--time differential operator, e.g., backward Euler $\delta = \frac{1-q^{-1}}{T_s}$, and $m$, $f_v$, and $f_c$ the mass, viscous friction coefficient, and Coulomb friction coefficient, respectively. 
We consider the following approaches to feedforward control:
\begin{enumerate}
	\item \emph{Direct inverse}, i.e., PGNN~\eqref{eq:PGNNInverse} trained according to~\eqref{eq:InverseIdentificationCriterionRegularized}. Essentially, the PGNN as proposed in~\cite{Bolderman2021}.
	\item \emph{Indirect optimization--based inverse}, i.e., PGNN~\eqref{eq:PGNN} trained according to~\eqref{eq:IdentificationCriterionMSERegularized}, with the feedforward resulting from Algorithm~\ref{alg:FeedforwardGradientBased} using $k = 5$ iterations. 
	\item \emph{Indirect analytical inverse}, i.e., PGNN~\eqref{eq:PGNNInvertible} trained according to~\eqref{eq:IdentificationCriterionMSERegularizedOnlyZeta} with feedforward~\eqref{eq:PGNNInvertibleFeedforward}. 
\end{enumerate}
All PGNNs follow the NARX structure, and have one hidden layer with $16$ $\tanh$--neurons.
Training is performed using the Levenberg--Marquardt algorithm with $\Lambda = 0.01 I$ in the identification criteria.
For each configuration, the PGNN was selected that achieved the smallest converged cost function out of $10$ trainings with random weight initialization.

\begin{figure}
	\begin{subfigure}{1\linewidth}
	\centering
	\includegraphics[width=1.0\linewidth]{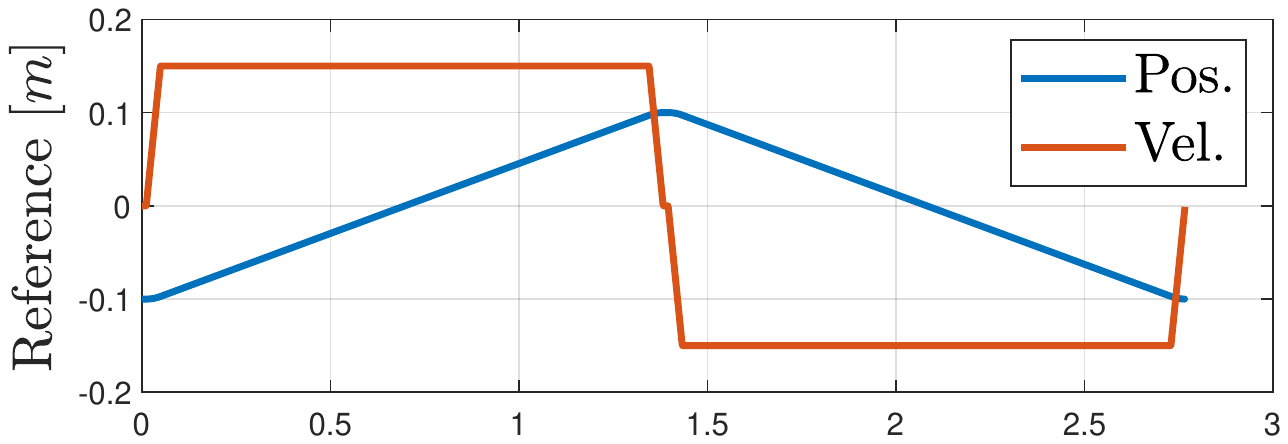}
	\end{subfigure}\hfill
	\begin{subfigure}{1\linewidth}
	\centering
	\includegraphics[width=1.0\linewidth]{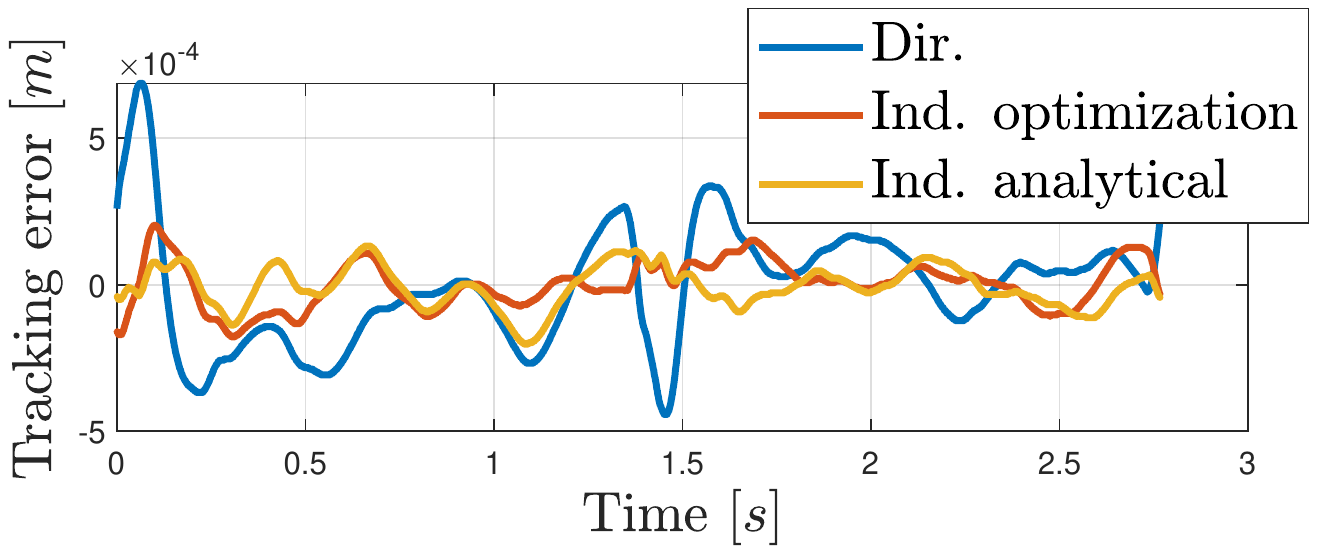}
	\end{subfigure}
	\caption{Tracking error (bottom) resulting from the reference (top) for the considered feedforward controllers. }
	\label{fig:TrackingError}
\end{figure}

Fig.~\ref{fig:TrackingError} shows the tracking error $e(t) = r(t) - y(t)$ resulting from the different PGNN feedforward controllers for the reference $r(t)$ with velocity $\max \big( \dot{r}(t) | \big) = 0.15$ $\frac{m}{s}$.
Especially during acceleration, the indirect methods proposed in this paper exhibit significantly smaller tracking errors.
Indeed, the indirect identification based PGNN feedforward controllers reduce the peak error, i.e., $\max \big( |e(t)| \big)$, with a factor of more than three.

\begin{figure}
	\centering
	\includegraphics[width=1.0\linewidth]{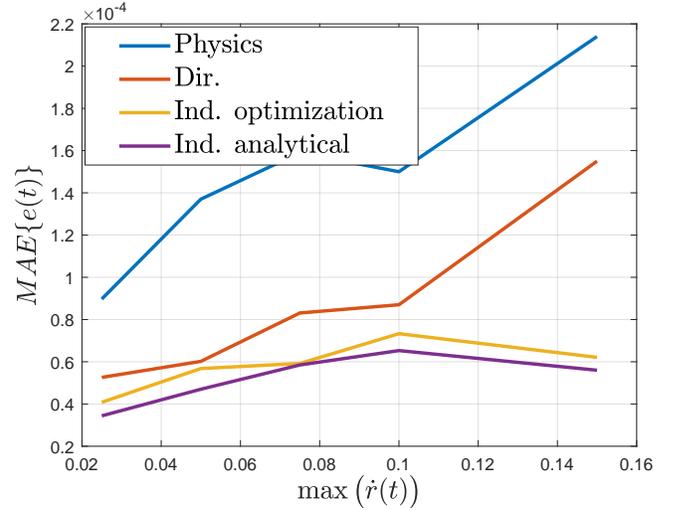}
	\caption{MAE of the tracking error evaluated on references with different maximum velocities $\max \big( \dot{r}(t) \big)$.}
	\label{fig:MAEDifferentVelocities}
\end{figure}

Fig.~\ref{fig:MAEDifferentVelocities} shows the mean--absolute error (MAE)
\begin{equation}
\label{eq:MAE}
	\textup{MAE} \big( e(t) \big) := \frac{1}{N_R} \sum_{t =0}^{N_R-1} | r(t) - y(t) |
\end{equation}
resulting from the different feedforward controllers when used on references with different maximum velocities $\max \big( | \dot{r}(t) | \big)$. 
All methods improve tracking performance with respect to the physics--based feedforward~\eqref{eq:FeedforwardPhysicsBased}.
Moreover, the indirect methods improve over the indirect method from~\cite{Bolderman2022} with around $20$--$40$\%, up until $70$\% for the velocity $0.15$ $\frac{m}{s}$.
The similarity in performance for the optimization--based and analytical feedforward indicates that the physics--based model~\eqref{eq:CLMDynamics} satisfies Assumption~\ref{as:GoodEnoughPhysicsBased} for the CLM, i.e., the CLM is linear in the input. 
The slightly better performance of the analytical inversion indicates that $k=5$ is too small for the Algorithm~\ref{alg:FeedforwardGradientBased} to converge. 
Choosing $k>5$ however, resulted in a computational load not manageable in real--time on the CLM, see Table~\ref{tab:CPUTime} which lists the computation times.
Future work will deal with computationally efficient algorithms for gradient--based inversion of PGNNs.  

\begin{table}
\centering
\caption{Mean of the computation time over the samples in the reference in Fig.~\ref{fig:TrackingError} on a $2.59$ $GHz$ Intel Core--i$7$--$9750$H using MATLAB 2019a.}
\label{tab:CPUTime}
\includegraphics[width=1.0\linewidth]{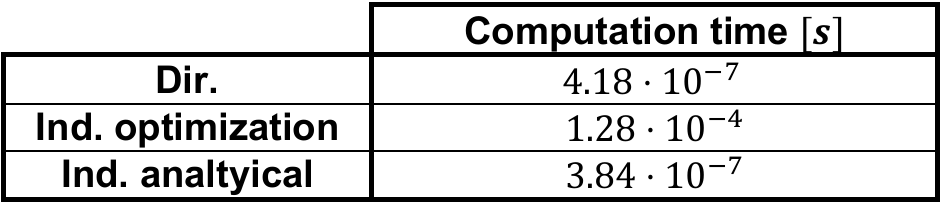}
\end{table}
 

\section{CONCLUSIONS}
\label{sec:Conclusions}
In this paper, we presented a framework for nonlinear feedforward control design in the presence of noisy data using physics--guided neurons networks. 
First, by using fundamental knowledge from the system identification field, we formulate assumptions on the PGNN model parametrization, data, and training procedure in order to obtain a consistent estimation of the forward system dynamics. 
Afterwards, two approaches are proposed for inversion of the identified PGNN describing the forward dynamics to obtain the feedforward controller. 
The developed methodology was validated on a real--life industrial linear motor, where it showed significant improvements with respect to the direct inverse approach for the same PGNN model structure.

\bibliographystyle{IEEEtran}
\bibliography{IEEEabrv,References}

\end{document}